\newcolumntype{x}[1]{>{\centering\arraybackslash}p{#1}}
\definecolor{brass}{rgb}{0.71, 0.65, 0.26}
\newtheorem{theorem}{Theorem}
\newtheorem{lemma}[theorem]{Lemma}
\newcommand{\ddiag}{\mathrm{Diag}\,}
\newcommand{\Z}{\mathbb{Z}}
\newcommand{\R}{\mathbb{R}}
\newcommand{\bNP}{\mathbf {NP}}
\newcommand{\norm}[1]{\left\|#1\right\|}
\DeclareMathOperator{\Span}{span}
\newcommand{\rmv}[1]{}
\title{Bounded Distance Decoding for Random Lattices} 
\author{Shuhong Gao\\
School of Mathematical and Statistical Sciences\\
Clemson University\\
Clemson, SC 29634-0975\\
  Email: sgao@clemson.edu
}
\date{}
\begin{document}

\maketitle
\begin{abstract}

The current paper investigates the bounded distance decoding (BDD) problem for ensembles of lattices whose generator matrices have i.i.d. subGaussian entries. We first prove that, for these ensembles, BDD remains NP‑hard in the worst case. Then,  we introduce a polynomial‑time algorithm based on singular value decomposition (SVD) and establish—both theoretically and through extensive experiments—that, for a randomly selected lattice from the same ensemble, the algorithm solves BDD with high probability. To the best of our knowledge, this work provides the first example of a lattice problem that is NP‑hard in the worst case yet admits a polynomial‑time algorithm on average-case.

 {\bf Keywords: }  Lattices,  bounded distance decoding (BDD), the learning with error (LWE) problem,  NP-hard problems, average-case complexity, singular value decomposition (SVD).
  
\end{abstract}

\section{Introduction and Main Contributions}

 Let $m \geq n \geq 1$ be integers,  $\R$ the field of real numbers, and $\Z$ the ring of integers. The vectors in $\R^m$ and  $\Z^n$  are viewed as column vectors of length $m$ and $n$, respectively. 
A lattice $L$ of rank $n$ in $\R^{m}$ is generated  by a matrix $B \in \R^{m \times n}$ of  rank $n$, that is, 
\[ L = L(B) = \{Bx : x \in \Z^n\}, \]
a discrete additive subgroup of $\R^m$, called a lattice of rank $n$ in $\R^m$, and $B$ is called a basis of  $L$.
For any $x = (x_1, \ldots, x_m)^T \in \R^m$, let $\|x\| = \sqrt{\sum_{i=1}^m x_i^2}$ (the 2-norm of $x$), and for two vectors $b, v \in \R^m$, let $d(b, v) = \|b-v\|$ denote the distance between $b$ and $v$. 
The minimum distance and  the  determinant of $L$ are defined as 
\[ \lambda_1(L) = \min_{v \in L, \, v \ne 0} \| v \|,   \quad \quad  \det(L) = \sqrt{\det(B^T B)}.\]
%The density of a lattice $L$ of rank $n$ is defined as   \[ \rho(L) =  \frac{\lambda_1(L)}{ (\det(L))^{1/n}}. \]
By Minkowski's theorem, we know that  $\lambda_1(L) \leq \sqrt{n}\cdot  (\det(L))^{1/n}$.
%hence $\rho(L) \leq \sqrt{n}$. 

\noindent {\bf Bounded Distance Decoding (BDD) Problem.} Given a matrix $B \in \R^{m\times n}$ of rank $n$, a target vector $b \in \R^m$ and a real number $r >0$, find $v \in L(B)$ so that $d(b, v) \leq r$ (if exists).
There are two possible choices for $r$:
\begin{enumerate}
\item[(a)] $r = \alpha \cdot \lambda_1(L(B))$ where $ 0 < \alpha < 1/2$ is a constant.  In this case, $\alpha$ is given and $\lambda_1(L(B))$ may be unknown. The BDD problem in this case is called  the $\alpha$-BDD problem in the literature   \cite{LLM06},  \cite{LM09}, \cite{BSW16}, \cite{BP20}, \cite{BPT22}.
\item[(b)] $r = c\sqrt{n}$ or $r = c \sqrt{n}  \cdot (\det(L))^{1/n}$ for some constant $c>0$.
\end{enumerate}
Case (a) arises naturally in practical applications, especially for  error-correction in coding theory where one is interested in  explicit constructions of 
 lattices (and linear codes over finite fields) with large minimum distances, and wants to find polynomial time decoding algorithms  for $\alpha < 1/2$.
 % where the required  vector $v$ is unique (if exists).  
 The minimum distance $\lambda_1(L)$ is typically known or pre-designed in these applications. 
 In complexity theory, one is often interested in the BDD problem for general classes of  lattices  where  $\lambda_1(L)$ is hard to compute exactly (or approximately). 
 Hence, for general lattices,  it is hard to verify whether  a given vector $v \in L$ is a valid solution for the $\alpha$-BDD problem.

The current paper focuses on Case (b). 
The primary objective  is to investigate  the worst-case and average-case complexities  of the BDD problem for various ensembles of lattices and for  $r = O(\sqrt{n})$ where $n$ is the rank of the lattices. 
The main  contribution  is encapsulated in the following three theorems, 
which demonstrate that there exist  ensembles  of lattices for which the BDD problem is NP-hard in the worst-case and yet there is a polynomial time algorithm  that can solve the BDD problem 
in polynomial time on the average-case in the sense that,  for a random lattice from the ensemble, the algorithm finds  a  required $v \in L$  with high probability in polynomial time.

The first theorem is about a simple method for reducing SAT-problem for Boolean variables to the BDD problem for lattices. 
\begin{theorem} \label{thm01}
For any formula  of the 3-SAT problem with $t$ clauses and  $k$ Boolean variables, there is a matrix $B \in \Z^{m\times n}$ of rank $n$ and a vector $b \in \Z^m$ with all  entries in $B$ and $b$ bounded by $2$,  where $m = k + 3t$ and $n = k + 2t$,   so that  the formula is satisfiable iff  $d(b, L(B)) \leq \sqrt{n}$. 
\end{theorem}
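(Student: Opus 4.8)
The plan is to give an explicit polynomial-size construction and then settle the two directions by a counting (budget) argument on the \emph{integer} residual $b-Bz$, exploiting that for integer $B,b$ the quantity $\|b-Bz\|^2$ is a sum of squares of integers, hence an integer. I would split the $m=k+3t$ coordinates into $k$ \emph{variable rows} and $3t$ \emph{clause rows}, and the $n=k+2t$ columns into $k$ \emph{variable columns} (with coefficients $z_1,\dots,z_k$) and $2t$ \emph{slack columns} (two per clause). On variable row $j$ I would place the entry $2$ in variable column $j$ and set the target $b_j=1$, so that this row contributes $(1-2z_j)^2$, which equals $1$ exactly when $z_j\in\{0,1\}$ and is at least $9$ otherwise; thus a binary $z$ reads off a truth assignment ($z_j=1\leftrightarrow x_j$ true) at a forced cost of exactly $k$, while every entry stays bounded by $2$. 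Full column rank $n$ is obtained by arranging the two slack columns of each clause to span a rank-$2$ block inside that clause's three rows.

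For each clause I would use its three rows, one per literal, so that, for binary $z$, each literal contributes a unit penalty when it is false and none when it is true: a positive literal $x_j$ is encoded by an entry $1$ in column $j$ with row target $1$ (residual $1-z_j$), and a negative literal $\neg x_j$ by the same entry with target $0$ (residual $-z_j$). The two slack columns of the clause are then designed so that, minimizing over integer slack values, the three rows together contribute exactly $2$ whenever at least one literal is true and exactly $3$ when all three are false. Granting this, the forward direction is immediate: a satisfying assignment gives a binary $z$ whose optimal slack yields total squared cost $k+2t=n$, so $d(b,L(B))\le\sqrt{n}$. For the converse, suppose $v=Bz$ has $\|b-v\|^2\le n$. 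I would first show $z$ must be binary: a deviating coordinate forces its variable row to cost $\ge 9$ and also inflates every clause row containing that variable, which against the budget $n=k+2t$ together with the nonnegative remaining costs I expect to be impossible. With $z$ binary the total cost equals $k+2\,(\#\text{satisfied})+3\,(\#\text{violated})=n+\#\text{violated}$, which is $\le n$ only when no clause is violated, so the assignment satisfies the formula.

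The main obstacle is establishing the clause gadget's \emph{tightness}: that each satisfiable clause is pinned to the constant contribution $2$ rather than merely bounded above by $2$. This is exactly what forbids the ``cross-clause trading'' whereby an assignment could overpay on some clauses (say, three true literals) in order to afford a violated clause elsewhere while still meeting the global threshold. The difficulty is sharpened by the entry bound $2$: unlike classical CVP reductions one cannot attach large separating weights to the constraints, so the entire separation between satisfiable and unsatisfiable instances must be produced by the dimension count against $\sqrt{n}$, and every unit of slack must be accounted for exactly. Concretely, I expect the delicate point to be the choice of the two slack columns (possibly coupling the three clause rows in a non-obvious way) together with a case analysis on the number of true literals $0,1,2,3$ and on non-binary deviations of $z$, verifying that the per-clause minimum over the slack is exactly $3-\min(1,\#\text{true})$. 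Pinning this value while simultaneously ruling out that some clever non-binary $z$ lowers the global cost below $n$ is the crux of the argument.
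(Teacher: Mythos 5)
You have the right global architecture --- variable rows $(1-2z_j)^2$ forcing binarity at unit cost, a budget of $n=k+2t$, and the accounting ``total $= n + \#\text{violated}$'' --- and this matches the paper's strategy in spirit. But the clause gadget you leave open is not merely delicate: as specified, it cannot exist. If the two slack columns of a clause are supported only on that clause's three literal rows, then for a binary assignment making all three literals true the pre-slack residual on those rows is the zero vector, and choosing the slack vector $s=0$ gives contribution $0$, not your pinned value $2$. More generally $\min_{s\in\Z^2}\|u-Ss\|^2=0$ whenever $u\in L(S)$, and it is impossible to force a strictly positive \emph{constant} minimum across all eight residual patterns $u\in\{0,1\}^3$; the minimum at $u=0$ is always $0$. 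So the identity ``each satisfiable clause contributes exactly $2$'' fails, the budget $n=k+2t$ becomes too generous, and precisely the cross-clause trading you worried about goes unexcluded: clauses with several true literals leave slack in the budget that can pay for a violated clause elsewhere. Your treatment of non-binary $z$ inherits the same problem, since it also relies on lower-bounding the per-clause minimum over slacks.

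The paper's construction repairs exactly this point by making the slack variables pay for themselves in dedicated rows, rather than asking the literal rows to carry a forced constant cost. Each clause $C_i$ gets \emph{one} aggregated constraint row $(Ax+y-z)_i$ with target $2-r_i$ (where $A$ is the $\pm1$ incidence matrix and $r_i$ counts negations), so that for binary data the row has zero residual iff the number of true literals $v_i=(Ax+r)_i$ satisfies $v_i-2=z_i-y_i\in\{-1,0,1\}$, i.e.\ iff $C_i$ is satisfied; and two slack-penalty rows, $2y_i$ and $2z_i$ with targets $1$, each contributing $(2y_i-1)^2\geq 1$ with equality iff $y_i\in\{0,1\}$ (likewise for $z_i$). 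Then for \emph{any} integer $(x,y,z)$,
\begin{equation*}
\|Bw-b\|^2 \;=\; \|Ax+y-z-(2J_t-r)\|^2+\|2x-J_k\|^2+\|2y-J_t\|^2+\|2z-J_t\|^2 \;\geq\; k+2t \;=\; n,
\end{equation*}
with equality iff $x,y,z$ are binary and every constraint row is exactly satisfied --- which is equivalent to satisfiability. Note how this sidesteps your impossibility: the constant ``$2$ per clause'' comes from the two penalty rows (which cost exactly $1$ each no matter what), the constraint row is pinned to residual $0$ because the other $k+2t$ rows already exhaust the budget, and non-binary deviations are killed by the diagonal-$2$ rows alone, with no case analysis on clauses. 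All entries remain bounded by $2$ (the targets $2-r_i$ lie in $\{-1,0,1,2\}$), and full column rank follows from the $2I_k$ and $2I_t$ blocks. If you replace your per-literal rows and clause-local slack columns with this ``one constraint row plus two self-penalizing slacks'' gadget, your counting argument goes through verbatim.
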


By the Cook-Levin Theorem,  the 3-SAT problem is NP-complete, hence the BBD problem is NP-hard for lattices defined by matrices with entries bounded by any constant $\gamma \geq 2$. 
Note that when $t<  k$, most formulae of the 3-SAT problem  with $t$ clauses over $k$ Boolean variables are satisfiable and it is easy to find a solution. Without loss of generality, we may assume $t \geq k$, hence 
$m = k +3t \geq \frac{4}{3}(k + 2t) =\frac{4}{3} n$. 

Later in the paper, we shall present a BDD algorithm based on singular value decomposition (SVD) for ensembles of random lattices that include the lattices from SAT problems as special cases. 
The lattice bases will be generated from Gaussian distributions and sub-Gaussian distributions. The next two theorems summarize what our BDD algorithm can achieve in success probability. 

\begin{theorem}\label{thm02}
Let $\sigma\geq 17$ be any constant and $\chi_{\sigma}$  the Gaussian distribution $N(0,\sigma^2)$ on $\R$. 
 Let  $m$ and $n$ be positive integers with $m\geq \frac{4}{3} n$. Given  a matrix $B \in \R^{m \times n}$ where each entry  is an i.i.d.\ sample from   $\chi_{\sigma}$,  and a vector $b \in \R^m$ with $d(b, L(B)) \leq \sqrt{n}$, our  BDD algorithm finds $v \in L(B)$ so that $d(b,v) \leq \sqrt{n}$ with probability 
at least $1-e^{-0.0045 m}$, where the success probability is over the randomness of $B$. 
\end{theorem}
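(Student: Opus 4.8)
The plan is to realize the SVD-based algorithm as least-squares decoding followed by coordinate-wise rounding, and to reduce correctness to a single well-conditioning event for $B$. Write the target as $b = Bx_0 + e$ with $x_0 \in \Z^n$ and $\|e\| \le \sqrt n$. Computing the thin SVD $B = U\Sigma V^T$ yields the Moore--Penrose pseudoinverse $B^+ = V\Sigma^{-1}U^T$, and the algorithm returns $v = Bx^*$ where $x^* = \lfloor B^+ b \rceil$ is the nearest-integer rounding of $\hat x = B^+ b$. Since $B^+ B = I_n$, we have $\hat x = x_0 + B^+ e$, so the algorithm recovers $x_0$ exactly --- and hence a vector $v = Bx_0$ with $d(b,v) = \|e\| \le \sqrt n$ --- precisely when $\|B^+ e\|_\infty < \tfrac12$.

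The first step is to extract the sharpest deterministic condition guaranteeing $\|B^+ e\|_\infty < \tfrac12$ for every admissible error. Because $e$ is adversarial (subject only to $\|e\| \le \sqrt n$), the worst case is governed by the $\ell_2 \to \ell_\infty$ operator norm of $B^+$, which equals the largest Euclidean norm $\max_i \|r_i\|$ among the rows $r_i$ of $B^+$; equivalently $\max_i \|r_i\|^2 = \max_i [(B^TB)^{-1}]_{ii}$. Thus rounding succeeds for all valid $b$ as soon as
\[ \sqrt n\,\max_{1\le i\le n}\|r_i\| < \tfrac12, \qquad\text{i.e.}\qquad \max_{1\le i\le n}\,[(B^TB)^{-1}]_{ii} < \frac{1}{4n}. \]
I would \emph{stress} that this row-wise condition is substantially weaker than the naive requirement $s_{\min}(B) > 2\sqrt n$ coming from $\|B^+e\|_\infty \le \|B^+ e\|_2 \le \|e\|/s_{\min}(B)$; using the maximum diagonal entry of $(B^TB)^{-1}$ rather than its operator norm is exactly what lets a modest constant such as $\sigma \ge 17$ suffice.

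It remains to show the displayed event holds with probability at least $1 - e^{-0.0045 m}$. Writing $B = \sigma G$ with $G$ having i.i.d.\ $N(0,1)$ entries, we have $[(B^TB)^{-1}]_{ii} = \sigma^{-2}[(G^TG)^{-1}]_{ii}$, and I would invoke the classical leverage identity $[(G^TG)^{-1}]_{ii} = 1/\|P_i g_i\|^2$, where $g_i$ is the $i$-th column of $G$ and $P_i$ is the orthogonal projection onto the complement of the span of the remaining columns. Since $g_i$ is independent of that span, $\|P_i g_i\|^2 \sim \chi^2_{m-n+1}$. Hence the condition fails at coordinate $i$ exactly when $\chi^2_{m-n+1} \le 4n/\sigma^2$. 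With $m \ge \tfrac43 n$ the degrees of freedom satisfy $m - n + 1 \ge \tfrac13 n$, while $4n/\sigma^2 \le 4n/289$, so the threshold sits far below the mean of the chi-square; a standard chi-square lower-tail (Chernoff) estimate then gives an exponentially small per-coordinate failure probability, and a union bound over the $n$ coordinates yields the claimed bound.

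The conceptual content is the reduction to the row-norm (max-diagonal) condition; the main technical obstacle is calibrating constants. Concretely, one must verify that the chi-square tail rate, after the union bound and across the full range $m \ge \tfrac43 n$ and $\sigma \ge 17$, never drops below $0.0045\,m$ --- the binding case being $m$ close to $\tfrac43 n$ and $\sigma$ close to $17$, where the gap between $4n/\sigma^2$ and $m-n+1$ is smallest. Absorbing the polynomial union-bound factor $n$ into the exponential and optimizing the Chernoff parameter to obtain the clean exponent $0.0045\,m$ is the only delicate computation; everything else is deterministic linear algebra.
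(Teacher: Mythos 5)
Your linear algebra is sound, but the algorithm you analyze is not the paper's algorithm. The BDD algorithm of Figure \ref{Fig01} does not compute $\lfloor B^{+}b\rceil$: it forms the augmented matrix $M=(B,-b)$, computes the last right singular vector $v_{n+1}=(z_1,\ldots,z_{n+1})^T$ of $M$, and rounds the ratios $z_i/z_{n+1}$ --- a total-least-squares step, not ordinary least squares, and the two maps differ in general. So, read literally, you prove the stated success guarantee for a sibling algorithm rather than for ``our BDD algorithm.'' The paper's actual route is: Lemma \ref{lemma_SVD_Int} (a singular-value-gap statement: any integer vector $x$ with $\|Mx\|<\sigma_k/2$ is the coordinate-wise rounding of a vector in the trailing singular subspace), then Lemma \ref{Lemma_LWE}, which reduces correctness of the Figure \ref{Fig01} algorithm to the single deterministic condition $\sigma_n(B)>2r$, and finally the Davidson--Szarek tail bound on the smallest singular value of a Gaussian matrix with $\beta=4/3$, $\epsilon=0.03$, for which $\sigma\ge 17> 2\sqrt{\beta^{-1}}/\bigl(1-\sqrt{\beta^{-1}}-\epsilon\bigr)\approx 16.66$ suffices. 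Your reduction to $\|B^{+}e\|_\infty<\tfrac12$ has no counterpart to Lemma \ref{lemma_SVD_Int}, which is the step that makes the augmented-SVD algorithm (rather than least squares) work.

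Granting the algorithm substitution, your probabilistic analysis is a genuinely different route and is essentially correct, with two corrections. First, your emphasized claim that the row-wise leverage condition $\max_i [(B^TB)^{-1}]_{ii}<\frac{1}{4n}$ is ``exactly what lets $\sigma\ge 17$ suffice'' is wrong: the paper obtains $\sigma\ge 17$ from the cruder condition $\sigma_n(B)>2\sqrt n$, since $\sigma_n(B)\gtrsim \sigma\sqrt m\,(1-\sqrt{3}/2-\epsilon)>2\sqrt n$ already holds at $\sigma=17$; the refinement buys a sharper exponent, not feasibility. Second, the calibration you defer does go through with large margins: with $d=m-n+1> m/4$ and threshold $4n/\sigma^2\le 12d/289$, the standard bound $P[\chi^2_d\le \delta d]\le (\delta e^{1-\delta})^{d/2}$ at $\delta=12/289$ gives per-coordinate failure at most $e^{-1.11d}\le e^{-0.27m}$, and the union bound over the $n\le \tfrac34 m$ coordinates is absorbed into $e^{-0.0045m}$ for all $m\ge 5$ (the few smaller $m$, where the claim is nearly vacuous, need a direct check). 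It is worth noting that your exponent comfortably exceeds the stated $0.0045m$, whereas the paper's own instantiation of Theorem \ref{thm03} with $\epsilon=0.03$ yields $e^{-\epsilon^2 m/2}=e^{-0.00045m}$, a factor of ten short of the claimed exponent as written; so your Gaussian-specific argument actually certifies the stated constant where the paper's computation appears not to. The trade-off is generality: the paper's $\sigma_n$-based reduction transfers verbatim to sub-Gaussian ensembles (Theorem \ref{thm04}, via Rudelson--Vershynin), while the identity $[(G^TG)^{-1}]_{ii}=\|P_i g_i\|^{-2}$ with $\|P_i g_i\|^2\sim\chi^2_{m-n+1}$ is exactly Gaussian.
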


One can prove a similar (weaker) result for more general ensembles of matrices that define random lattices. 
A random variable $X$ on $\R$  is said to be sub-Gaussian with
parameter  $\gamma>0$  if $E[X] =0$ and its moment generating function satisfies
\[ E[\exp(zX)]  \leq \exp\left(\frac{\gamma^2z^2}{2}\right), \quad  \text{for every }  z \in \R.\]
In terms of tail probabilities, if $X$ is sub-Gaussian with parameter $\gamma$  then, for every $z>0$, 
\[ P( X > z) \leq \exp\left(- \frac{z^2}{2 \gamma^2}\right), \quad \text{and}  \quad  P( X <- z) \leq \exp\left(- \frac{z^2}{2 \gamma^2}\right).\]
The variance $\sigma^2$ of a sub-Gaussian random variable with parameter $\gamma$  is at most $\gamma^2$. 
For examples, the  normal distribution $N(0, \gamma^2)$ is sub-Gaussian with parameter $\gamma$ and variance $\sigma^2=\gamma^2$,  and every random variable on the interval $[-\gamma, \gamma]$ with expected value $0$ is sub-Gaussian with parameter $\gamma$ and its variance varies. Particularly, the uniform random variable on the real interval $[-\gamma, \gamma]$ is sub-Gaussian with parameter $\gamma$ and  variance $\sigma^2 = \gamma^2/3$, and the Bernoulli uniform random variable on the set $\{-\gamma, \gamma\}$ is sub-Gaussian with parameter $\gamma$ and variance $\sigma^2 = \gamma^2$.

\begin{theorem}\label{thm02b}
Let  $\chi_{\sigma}$  be any sub-Gaussian distribution on $\R$ with variance $\sigma^2$.   There exist constants $c_1, c_2>0$, which depend only  on  the distribution $\chi_{\sigma}$, so that if 
\[ \frac{m}{n}  >  \left(1+ \frac{4 c_1}{\sigma}\right )^2, \]
then, for any  matrix $B \in \R^{m \times n}$  where each entry  is an i.i.d.\ sample from   $\chi_{\sigma}$ and  a vector $b \in \R^m$ with $d(b, L(B)) \leq \sqrt{n}$,  our BDD algorithm  finds $v \in L(B)$ satisfying $d(b,v) \leq \sqrt{n}$  with probability  at least  $1-2^{-(m-n+1)} -  e^{-c_2m}$, 
where the success probability is over the randomness of $B$.
\end{theorem}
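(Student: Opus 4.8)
The plan is to run exactly the SVD-based decoder analyzed for Theorem~\ref{thm02}; the only genuinely new ingredient needed is a lower-tail bound on the smallest singular value of a rectangular matrix with i.i.d.\ \emph{subgaussian} (rather than Gaussian) entries. First I would record the setup: since $d(b, L(B)) \le \sqrt n$, there are $x_0 \in \Z^n$ and $e \in \R^m$ with $b = Bx_0 + e$ and $\|e\| \le \sqrt n$. The decoder computes the SVD $B = U\Sigma V^T$, forms the pseudoinverse $B^+ = V\Sigma^{-1}U^T$, computes $x^* = B^+ b$, rounds each coordinate to the nearest integer to get $\hat x$, and outputs $v = B\hat x$. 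Because $B^+ B = I_n$, we have $x^* = x_0 + B^+ e$; hence $\hat x = x_0$, and therefore $v = Bx_0$ with $d(b,v) = \|e\| \le \sqrt n$, as soon as $\|B^+ e\|_\infty < \tfrac12$.

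Next I would make the perturbation bound deterministic: $\|B^+ e\|_\infty \le \|B^+ e\|_2 \le \|B^+\|_{\mathrm{op}}\,\|e\|_2 = \|e\|_2 / s_{\min}(B) \le \sqrt n / s_{\min}(B)$, using that the operator norm of $B^+$ equals $1/s_{\min}(B)$. Thus the decoder succeeds whenever $s_{\min}(B) > 2\sqrt n$, and the entire statement reduces to showing that this event fails with probability at most $2^{-(m-n+1)} + e^{-c_2 m}$.

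The crux is the probabilistic estimate. Writing $A = B/\sigma$, so that $A$ has i.i.d.\ unit-variance subgaussian entries and $s_{\min}(B) = \sigma\, s_{\min}(A)$, I would invoke a Rudelson--Vershynin-type bound: there exist $c_1, c_2 > 0$ depending only on the subgaussian parameter of $\chi_\sigma$ such that, for every $\epsilon > 0$,
\[ P\bigl(s_{\min}(A) \le \epsilon(\sqrt m - \sqrt{n-1})\bigr) \le (c_1\epsilon)^{m-n+1} + e^{-c_2 m}. \]
Taking $\epsilon = (2\sqrt n/\sigma)/(\sqrt m - \sqrt{n-1})$ makes the event on the left exactly $\{s_{\min}(B) \le 2\sqrt n\}$. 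The hypothesis $m/n > (1 + 4c_1/\sigma)^2$ gives $\sqrt m - \sqrt{n-1} > \sqrt m - \sqrt n > 4c_1\sqrt n/\sigma$, so $c_1\epsilon < \tfrac12$ and $(c_1\epsilon)^{m-n+1} < 2^{-(m-n+1)}$. Combining the last two displays yields $P(s_{\min}(B) \le 2\sqrt n) < 2^{-(m-n+1)} + e^{-c_2 m}$, which is the desired bound.

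The hard part will be pinning down the subgaussian smallest-singular-value estimate with the right shape: I need the failure probability to split cleanly into a term $(c_1\epsilon)^{m-n+1}$ (which becomes $2^{-(m-n+1)}$) and a term $e^{-c_2 m}$, and I need the leading constant in that estimate to be precisely the $c_1$ governing the aspect-ratio threshold $m/n > (1+4c_1/\sigma)^2$; the scaling between the variance $\sigma^2$ and the subgaussian parameter $\gamma$ must also be tracked so that $c_1, c_2$ depend only on $\chi_\sigma$. Everything else --- the identity $B^+ B = I_n$, the norm bound, and the rounding argument --- is the same as, and weaker than, the Gaussian analysis behind Theorem~\ref{thm02}.
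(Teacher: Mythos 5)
Your argument is correct, and its probabilistic half coincides exactly with the paper's: the paper proves the general Theorem~\ref{thm04} by applying the Rudelson--Vershynin smallest-singular-value bound to the unit-variance normalization $A = B/\sigma$, and then obtains Theorem~\ref{thm02b} by specializing $\epsilon = 1/(2c_1)$ and $r = \sqrt{n}$; that is the same computation you perform with your data-dependent choice $\epsilon = (2\sqrt{n}/\sigma)/(\sqrt{m}-\sqrt{n-1})$, since in both cases the hypothesis $m/n > (1+4c_1/\sigma)^2$ is precisely what forces $c_1\epsilon < 1/2$ and hence the term $2^{-(m-n+1)}$ (your use of the sharper threshold $\sqrt{m}-\sqrt{n-1}$ from the original statement of the bound, versus the paper's $\epsilon(1-\sqrt{\beta^{-1}})\sqrt{m} = \epsilon(\sqrt{m}-\sqrt{n})$, is harmless). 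The genuine divergence is the deterministic step, and there is one mismatch to flag: the algorithm the theorem refers to (Figure~\ref{Fig01}) is not the least-squares decoder you analyze. It computes a single SVD of the augmented matrix $M = (B,-b)$ and rounds the ratios $z_i/z_{n+1}$ taken from the last right singular vector; its correctness under $\sigma_n(B) > 2r$ is the content of Lemmas~\ref{lemma_SVD_Int} and~\ref{Lemma_LWE}, which use $\sigma_n(M) \ge \sigma_n(B)$ and $\sigma_{n+1}(M) \le \|M\tilde{x}\|_2/\|\tilde{x}\|_2 \le r$ to conclude that $\tilde{x} = (x^T,1)^T$ is the coordinatewise rounding of a multiple of $v_{n+1}$. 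Your pseudoinverse decoder $\hat{x} = \lfloor B^+ b\rceil$ with $\|B^+e\|_\infty \le \|e\|_2/\sigma_n(B) < 1/2$ is a correct and more elementary (Babai-style) substitute that succeeds under exactly the same condition $\sigma_n(B) > 2\sqrt{n}$ (and you correctly only need $B^+B = I_n$ on that event, where $B$ has full column rank), so the bound you prove is true for the decoder you define --- but strictly speaking it is not a proof of the statement about the paper's algorithm. The repair costs one line: replace your perturbation paragraph with an invocation of Lemma~\ref{Lemma_LWE}, leaving everything downstream untouched, including your constant bookkeeping (the $c_1, c_2$ come from the Rudelson--Vershynin bound applied to $\chi_\sigma/\sigma$, hence depend only on $\chi_\sigma$, as required). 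In terms of trade-offs, your route avoids the singular-gap lemma entirely and rests on the standard rounding analysis, while the paper's augmented-matrix formulation reads $x$ off the smallest singular direction of $(B,-b)$ without solving a least-squares system and is the version implemented in its experiments.
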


While it remains a challenge to find the optimal values for the constants $c_1$ and $c_2$, we conducted computer experiments on the success probability of our BDD algorithm for subGaussian distributions.  Our  experimental  results shows that the practical performance of the BDD algorithm substantially exceeds the theoretical predictions by Theorem \ref{thm02b}, hence demonstrating that the BDD problem for the ensembles of lattice generated from any subGaussian distribution is solvable in polynomial time on the average-case, though it is NP-hard in the worst-case.

 The rest of the paper is organized as follows. Section \ref{sec2} is devoted to the proof of Theorem \ref{thm01},  Section \ref{sec3} gives the bounded distance decoding algorithm and  proof of Theorem \ref{thm02}, and Section \ref{sec4} presents data from computational experiments on the learning with error  (LWE) problem  over real numbers, which is a special case of the BDD problem.

\section{Reducing the SAT Problem to the BDD problem} \label{sec2}

This section is devoted to the proof of Theorem \ref{thm01}.
A Boolean variable $x$ takes value in $\{0, 1\}$ where $0$ represents  False and  $1$ for True, and the negation of $x$ is $\neg x = 1-x$. 
Let  $x = (x_1, x_2, ..., x_k)^T$  (a column vector) denote $k$ Boolean variables.  A  Boolean formula $\phi$ over $x$ is composed of literals (variables or their negation) and the logical connectives AND, OR, and NOT.   A Boolean formula is called  in conjunctive normal form (CNF) if it is a conjunction (AND) of one or more clauses, where a clause is a disjunction (OR) of literals. The SAT problem is,
given a formula in CNF in $k$ variables $x_1,\dots, x_k$, decide if  there exist $x_1,\dots, x_k\in\{0,1\}$ so  that the formula is true (i.e. takes value 1).
The next theorem can be found in most textbooks on computational complexity, see for example \cite{AB09}. 

\vspace{0.2cm}
\noindent {\bf Cook-Levin Theorem.}  {\it The SAT problem is NP-complete.}
\vspace{0.2cm}

\rmv{
This is one of the most important results in theoretical computer science, proved by Cook and Levin separately  around 1971.  Shortly after Cook's paper, Karp wrote a paper showing a list of 21 $\bNP$-complete problems. Now thousands of problems are known to be $\bNP$-complete.  These problems  are equivalent in the sense that if there is a polynomial time algorithm for one, then so is for every other one. 
}

We use a special case of the SAT problem. 
The 3-SAT problem is the SAT problem where each  formula consists 
of $\ell$-clauses with $1 \leq \ell \leq 3$. % (Note that this is slightly different from the literature where it is required that all clauses are 3-clauses).    
Note that any $\ell$-clause with $\ell >3$ can be converted a 3-SAT formula. Indeed, 
for any $\ell$-clause $(z_1\lor z_2 \lor \cdots \lor z_{\ell})$, introduce $\ell-2$ new variables $y_1, \ldots, y_{\ell-2}$. The $\ell$-clause can be transformed into a conjunction  of $3$-clauses:
\[ (z_1 \lor z_2 \lor y_1)\land (\neg y_1 \lor z_3 \lor y_2) \land (\neg y_2 \lor z_4 \lor y_3)\land \cdots \land (\neg y_{\ell-2} \lor z_{\ell-1} \lor z_\ell).
\]
One can check that the original clause is satisfied iff the same assignment to the original variables can be augmented by an assignment to the new variables that satisfy the conjunction  of clauses.
Hence, by  the Cook-Levin theorem, we see that  the 3-SAT problem is also $\bNP$-complete. 

%We want to reduce the SAT problem to small integer solution problem in lattices.  
Suppose $\phi$ is a formula  in CNF with $t$ clauses $C_1, \ldots, C_t$  in $k$ variables in $x = (x_1, \ldots, x_k)^T$.  Define an $t \times k$ matrix $A=(a_{ij})$ as follows where the $i$-th row represents the clause $C_i$ and the $j$-th column
 the variable $x_j$:
\[ a_{ij} = \left\{ \begin{array}{rl}   
1, & \text{ if $C_i$ contains $x_j$}, \\
0, & \text{ if $C_i$ does not contain $x_j$ nor $\neg x_j$},\\
-1, & \text{ if $C_i$ contains $\neg x_j$}.
\end{array}
\right.
\]
We call $A$ the incidence matrix of $\phi$. Also, define a vector $r = (r_1, \ldots, r_t)^T$ where  $r_i$ is the number of negations in $C_i$ for $1 \le i \le t$. 
For example, let 
$$\phi = (x_1 \lor x_2 \lor  x_3) \land (\neg x_1 \lor x_2 \lor x_4) \land (\neg x_2 \lor \neg x_3 \lor x_4) \land (x_1 \lor x_3 \lor \neg x_4) \land (\neg x_2 \lor \neg x_3 \lor \neg x_4).$$
Then 
\[ A = \begin{pmatrix}  1 & 1 & 1 & 0\\ -1 & 1 & 0 & 1\\ 0 & -1 & -1  & 1 \\ 1 & 0 & 1 & -1 \\ 0 & -1 & -1 & -1 \\ \end{pmatrix}, \quad  \quad 
r = \begin{pmatrix} r_1\\ r_2 \\ r_3 \\  r_4 \\ r_5 \\ \end{pmatrix} = \begin{pmatrix} 0\\ 1  \\ 2 \\ 1 \\ 3 \\ \end{pmatrix},\]
and 
\[ A x + r  = \begin{pmatrix} x_1 + x_2 + x_3 \\ (1-x_1) + x_2 + x_4  \\ (1-x_2) +(1-x_3) + x_4 \\ x_1 + x_3 + (1-x_4) \\ (1-x_2) + (1-x_3) + (1-x_4) \end{pmatrix}.\]
For any integer $t$, let $I_t$ denote the identity matrix of order $t$ and  $J_t$  the column vector of length $t$ with each entry equal to $1$. 
We claim that the formula $\phi$ is satisfiable iff there  exist $x \in \{0,1\}^k$ and  $y, z \in \{0,1\}^t$ so that 
\begin{equation}\label{eq_subset}
 A x + y - z  = 2J_t - r.
 \end{equation}
Indeed, suppose $\phi$ is satisfiable, that is, there exists $x \in \{0, 1\}^k$ so that $\phi$ is true, hence each clause $C_i$ is true, which means that  $v = Ax + r$ has each entry at least $1$ and at most $3$ 
(since each clause is a $\ell$-clause with $1 \leq \ell \leq 3$). For each $1 \leq i \leq t$,   let
\[
\begin{cases}
 y_i =1,  \quad z_i = 0, & \text{if } v_i = 1,\\
 y_i = 0,  \quad z_i = 0, & \text{if } v_i = 2,\\
 y_i = 0,  \quad z_i = 1, & \text{if } v_i = 3.
\end{cases}
\]
Then (\ref{eq_subset}) holds. Conversely, suppose  (\ref{eq_subset}) holds with some $x \in \{0,1\}^k, y, z \in \{0,1\}^t$. Since $|y_i - z_i| \leq 1$,  we see that  $Ax + r$ has each entry at least $1$, 
hence the formula $\phi$ is satisfied by $x$. 

Next,  let $n= k + 2t$. and  $m= k + 3t$. Define an $m \times n$ matrix $B$ and a vector $b \in \Z^m$: 
\begin{equation}\label{eq_SAT_CVP}
 B = \begin{pmatrix}    A&  I_t & 0 \\  2 I_k & 0 & 0 \\  0 & 2 I_t & 2I_t \\ 0 & 0 & 2 I_t  \end{pmatrix},
 \quad b = \begin{pmatrix} 2J_t - r \\ J_k \\ J_t \\ J_t \end{pmatrix}.
 \end{equation}
 Let $L = L(B)$, the lattice generated by the columns of $B$. Then $L$ has rank $n = k + 2t$ and each vector in $L$ is of the form
\[ v= B \begin{pmatrix} x \\ y-z \\ z\end{pmatrix} =  \begin{pmatrix} Ax +y -z  \\ 2x\\ 2y \\ 2z \end{pmatrix}\]
for some   $x \in \Z^k$ and some  $y, z \in \Z^t$. 
Note that, for any  $x \in \Z^k$,  we have $\|2x -J_k\|^2 \geq k$ and the equality holds iff  $x \in \{0,1\}^k$; similarly for $2y-J_t$ and $2z - J_t $. 
Hence we have $d^2(v,b) = \|v-b\|^2 \geq  k + 2t$ and the equality holds iff $x \in \{0,1\}^k, y,z \in \{0,1\}^t$,  and   (\ref{eq_subset}) holds. 
This proves that 
\[\min_{v \in L}  \norm{v -b} \geq \sqrt{k + 2t} = \sqrt{n},
\]
and the  equality holds iff the formulae $\phi$ is satisfiable.  In another word, the formula $\phi$ is satisfiable iff $d(b, L) \leq  \sqrt{n}$. 

This proves Theorem \ref{thm01}. By Cook-Levin Theorem,  this implies the BDD problem is NP-hard for lattices defined by matrices with entries bounded by any constant $\gamma \geq 2$.

\section{Bounded Distance  Decoding for lattices} \label{sec3}

In this section, we present our BDD algorithm and the proof of Theorems \ref{thm02} and \ref{thm02b}. 
In fact, we shall prove a  more general version where the distance bound can be any real number $r>0$, instead of $\sqrt{n}$. 

\subsection{SVD and  BDD  Algorithm}
The main tool we use  is SVD  (singular value decomposition) of matrices. 
Let $m \geq n+1$.  Every  matrix $M \in \R^{m\times (n+1)}$ has an SVD: 
\begin{equation} \label{eq_SVD}
 M = U \Sigma V^T
 \end{equation}
 where $U = (u_1, \ldots, u_m) \in \R^{m \times m}$ and $V = (v_1, \ldots, v_{n+1}) \in \R^{(n+1) \times (n+1)}$ are orthonormal  and  $\Sigma \in \R^{m \times (n+1)}$ is diagonal with entries
  $\sigma_1 \ge\sigma_2 \ge \cdots  \ge  \sigma_{n+1} \ge  0$ on its diagonal.   
  The matrices $U$ and $V$ are not unique in general, but the singular values  $\sigma_i$'s are unique, denoted $\sigma_i(M)$, which are characterized by the following property:
  for $1 \leq k \leq n+1$, 
  \begin{equation}\label{eq_MaxW2}
 \max_{W \subset \R^{n+1},\, \dim(W) =k\ }  \Big(\min_{x \in W, \ \|x\|_2 = 1}  \|Mx \|_2\Big) = \sigma_{k}(M), 
 \end{equation}
 where $W$ runs through linear subspaces of $\R^{n+1}$ and  the optimal value is obtained when $W = \Span(v_1, \ldots, v_k)$ and $x = v_k$; and 
\begin{equation}\label{eq_MinW2}
 \min_{W \subset \R^{n+1},\, \dim(W) =n+2-k\ } \Big( \max_{x \in W,\ \|x\|_2 = 1}  \|Mx \|_2 \Big) = \sigma_{k}(M), 
 \end{equation}
 where  the optimal value is obtained when $W = \Span(v_{k}, \ldots, v_{n+1})$ and $x= v_{k}$. This property will be useful for our  proof later.

Our DBB algorithm is presented in Figure (\ref{Fig01}). Its correctness and probability of success (for a random $B$) will be given in the next subsection. 
\begin{figure*}[th!]
   \begin{center}
   \renewcommand{\arraystretch}{1.25}
       \begin{tabular}{|p{1.5cm}p{12.5cm}p{.02cm}|}
       \hline
            & \textbf{Bounded Distance Decoding for Lattices} &  \\
 \hline
  Input: &   $B \in \R^{m \times n}$,  $b \in \R^m$, and  $r>0$. & \\ 
    Output: & $x \in \Z^n$ so that $\| Bx- b\| \leq r$, or "Failure". & \\
        \hline
Step 1. & Form the matrix $M :=(B, -b)$  and compute an SVD:   $M = U \Sigma V^T$ & \\
	    & where $U \in \R^{m \times m}$ and $V \in \R^{(n+1)\times (n+1)}$ are orthonormal  &\\
	    & and $\Sigma = \ddiag(\sigma_1, \ldots, \sigma_{n+1}) \in \R^{m \times (n+1)}$. &\\
Step 2. & Let $z=(z_1, \ldots, z_{n+1})^T$ be the $(n+1)$-st column of $V$. &\\
             & If $z_{n+1} = 0$ then return  "Failure". & \\
 Step 3.&   Compute $x = (x_1, \ldots, x_n)^T$ where   $x_i := \lceil \frac{z_i}{z_{n+1}}\rfloor \in \Z$ for $1 \leq i \leq n$. & \\
  &   If $\|Bx- b\| \leq r$  then return $x$ else return "Failure".  &\\
 \hline
      \end{tabular}
      \renewcommand{\arraystretch}{1}
      \caption{BDD Algorithm}
      \label{Fig01}
   \end{center}
\end{figure*}

\subsection{Theoretical Analysis}

The next lemma shows a connection between SVD and  short vectors in lattices. 
\begin{lemma}\label{lemma_SVD_Int}
Let $M\in \R^{m \times (n+1)}$  with  an SVD as in (\ref{eq_SVD}).  Let $k \leq n$ be an index  so that $\sigma_k > \sigma_{k+1}$. 
Then, for any $x \in \Z^n$ with   $\|Mx\|_2 <  \sigma_{k}/2$, there exists $z \in Z_k  = \Span(v_{k+1}, \ldots, v_{n+1})$ so that  $x  = \lfloor z  \rceil$. 
\end{lemma}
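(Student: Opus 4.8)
The plan is to diagonalize the problem through the right-singular basis. First I would make the homogenizing convention explicit: I read $\|Mx\|_2$ as $\|M\hat x\|_2$, where $\hat x=(x_1,\dots,x_n,1)^T\in\Z^{n+1}$ is the integer vector $x$ augmented by a last coordinate equal to $1$ (so that, for $M=(B,-b)$, one has $M\hat x=Bx-b$), and I read $\lfloor z\rceil$ as entrywise nearest-integer rounding. Expanding $\hat x$ in the orthonormal basis $v_1,\dots,v_{n+1}$ of right-singular vectors, write $\hat x=\sum_{i=1}^{n+1}c_iv_i$ with $c_i=v_i^T\hat x$. Since $U$ and $V$ are orthonormal and $\Sigma$ is diagonal, $M\hat x=U\Sigma V^T\hat x$ satisfies $\|M\hat x\|_2^2=\sum_{i=1}^{n+1}\sigma_i^2c_i^2$; this identity is what turns the hypothesis into a statement purely about the coefficients $c_i$.

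The key step is a sharp bound on the top coefficients. From $\|M\hat x\|_2<\sigma_k/2$ I get $\sum_{i=1}^{n+1}\sigma_i^2c_i^2<\sigma_k^2/4$, and discarding the nonnegative tail terms with $i>k$ leaves $\sum_{i=1}^{k}\sigma_i^2c_i^2<\sigma_k^2/4$. Because $\sigma_i\ge\sigma_k$ for every $i\le k$, I may replace each $\sigma_i$ by $\sigma_k$ on the left and divide by $\sigma_k^2$ to obtain the crucial inequality $\sum_{i=1}^k c_i^2<1/4$ (note this bounds the \emph{unweighted} sum, not merely each term). Now set $z=\sum_{i=k+1}^{n+1}c_iv_i$, the orthogonal projection of $\hat x$ onto $Z_k=\Span(v_{k+1},\dots,v_{n+1})$; then $\hat x-z=\sum_{i=1}^k c_iv_i$, so by orthonormality $\|\hat x-z\|_2^2=\sum_{i=1}^kc_i^2<1/4$, hence $\|\hat x-z\|_2<1/2$. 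Since the $\ell_\infty$ norm is dominated by the $\ell_2$ norm, each coordinate of $z$ lies within $1/2$ of the corresponding integer coordinate of $\hat x$, so $\lfloor z\rceil=\hat x$, which gives $x=\lfloor z\rceil$ and exhibits the required $z\in Z_k$.

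I expect the only genuine obstacle to be the coefficient estimate in the second paragraph: the statement would be false with the naive bound $\sum_{i\le k}c_i^2<k/4$ obtained by bounding each term separately, and the whole point is that the weighted sum $\sum_i\sigma_i^2c_i^2$ controls the unweighted partial sum $\sum_{i\le k}c_i^2$ precisely because all of $\sigma_1,\dots,\sigma_k$ are at least $\sigma_k$. The remaining ingredients are routine: the norm identity is just orthogonal invariance of the $2$-norm, and the passage from $\ell_2$ to $\ell_\infty$ is immediate. Finally, the gap hypothesis $\sigma_k>\sigma_{k+1}$ is not needed for the inequalities themselves but guarantees that the bottom singular subspace $Z_k$, and hence the projection $z$, is well defined, so I would invoke it only at the moment $Z_k$ is introduced.
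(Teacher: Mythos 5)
Your proof is correct and follows essentially the same route as the paper's: decompose the integer vector (in the right-singular basis) into its component $w$ in $\Span(v_1,\dots,v_k)$ and its projection $z$ onto $Z_k$, use $\|Mx\|_2^2=\sum_i\sigma_i^2c_i^2\ge\sigma_k^2\|w\|_2^2$ to get $\|w\|_2<1/2$, and conclude via $\|w\|_\infty\le\|w\|_2$ that rounding $z$ recovers the integer vector. Your explicit homogenization $\hat x=(x,1)^T$ is a reasonable repair of the statement's dimension mismatch (the paper's proof silently treats $x$ as an element of $\Z^{n+1}$, matching the later application with $\tilde x$), and your closing remark is also apt in that the gap hypothesis enters only to guarantee $\sigma_k>0$ (so the division is legitimate) and, in the paper's subsequent use, to pin down $Z_k$.
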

\begin{proof}
Decompose $x$ as   $x = z+ w$ where $z \in Z_k$ and $w \in \Span(v_1, \ldots, v_k)$.
We can write $z$ and $w$ as 
\[ z = \sum_{i=k+1}^{n+1} c_i v_i, \quad \quad  w = \sum_{i=1}^{k} c_i v_i\]
where $c_i \in \R$.  Since $V = (v_1, \ldots, v_{n+1})$ is orthonormal, we have  $\|w \|_2^2 = \sum_{i=1}^{k}  |c_i |^2$. Note that 
$Mx  =    \sum_{i=1}^{n+1} c_i\sigma_i  u_i$.
 Since $U = (u_1, \ldots,  u_m)$ is orthonormal, we have
\[ \|Mx \|_2^2 =   \sum_{i=1}^{n+1}  |c_i|^2   \sigma_i^2 \ge   \sum_{i=1}^{k}  |c_i|^2 \sigma_i^2
\ge   \sum_{i=1}^{k}  |c_i|^2 \sigma_k^2 =  \|w\|_2^2 \sigma_k^2.
\]
Since $\|Mx\|_2 <  \sigma_{k}/2$ by assumption,  we have  $\|w\|_2 \sigma_k <  \sigma_{k}/2$, thus $\|w\|_{\infty} \leq \|w\|_2 < 1/2$. 
 As $x \in \Z^n$, we have   $\lfloor z \rceil  = \lfloor x - w \rceil = x$ as claimed. 
\end{proof}

\begin{lemma}\label{Lemma_LWE}
Let $B \in \R^{m \times n}$,  $b \in \R^m$ and $x \in \Z^n$ so that 
$b = Bx + e$ for  $e \in \R^m$ with $\|e \|_2 \leq r$. 
  Suppose $\sigma_n(B) > 2 r$. Then
the BDD algorithm in Figure \ref{Fig01}  finds $x$ correctly. 
\end{lemma}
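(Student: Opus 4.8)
The plan is to run the algorithm on the promised instance and verify that every step succeeds. Write $\tilde x = (x^T, 1)^T \in \Z^{n+1}$ and $M = (B, -b)$, so that $M\tilde x = Bx - b = -e$ and hence $\|M\tilde x\|_2 = \|e\|_2 \le r$. Thus $\tilde x$ is an integer vector on which $M$ is almost singular, and intuitively its direction must be close to that of the smallest right singular vector $z = v_{n+1}$. I would make this precise by decomposing $\tilde x = cz + w$ orthogonally, with $c = \langle \tilde x, z\rangle$ and $w \in \Span(v_1, \ldots, v_n)$. Because $Mz = \sigma_{n+1}u_{n+1}$ is orthogonal to $Mw \in \Span(u_1, \ldots, u_n)$, orthogonality gives $\|e\|_2^2 = \|M\tilde x\|_2^2 = c^2\sigma_{n+1}^2 + \|Mw\|_2^2$. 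In particular $\|Mw\|_2 \le \|e\|_2$, and since $w$ lies in the top-$n$ right singular subspace, $\|Mw\|_2 \ge \sigma_n(M)\|w\|_2 \ge \sigma_n(B)\|w\|_2$ by singular-value interlacing for an appended column. Combined with $\sigma_n(B) > 2r \ge 2\|e\|_2$ this yields $\|w\|_2 \le \|e\|_2/\sigma_n(B) < 1/2$; reading off the last coordinate of $\tilde x = cz + w$ gives $cz_{n+1} = 1 - w_{n+1}$ with $|w_{n+1}| < 1/2$, so $z_{n+1} \ne 0$ and Step 2 does not report failure.

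The heart of the argument, and the step I expect to be the main obstacle, is showing that rounding the algorithm's vector $z/z_{n+1}$ (rather than the orthogonal projection $cz$, which is what the preceding lemma naturally controls) recovers $x$. These two multiples of $z$ differ by the factor $1/(1 - w_{n+1})$, and a naive triangle-inequality bound on $\|z'/z_{n+1} - x\|_2$ is hopeless: it involves $\|x\|_2$ and breaks down when the true coordinates are large. Instead I would bound $B(p'-x)$ directly, where $p' = (z_1/z_{n+1}, \ldots, z_n/z_{n+1})^T$. Since $p = z/z_{n+1}$ has last coordinate $1$, we have $Bp' - b = Mp = \sigma_{n+1}u_{n+1}/z_{n+1}$, and writing $e = -M\tilde x = -c\sigma_{n+1}u_{n+1} - Mw$ together with $1/z_{n+1} - c = w_{n+1}/z_{n+1}$ gives the clean identity
\[
B(p' - x) = (Bp' - b) + e = \sigma_{n+1}\frac{w_{n+1}}{z_{n+1}}\,u_{n+1} - Mw .
\]
The two terms are again orthogonal, so $\|B(p'-x)\|_2^2 = \sigma_{n+1}^2 w_{n+1}^2/z_{n+1}^2 + \|Mw\|_2^2$. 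The crucial observation is that the first term is at most $c^2\sigma_{n+1}^2$: this reduces to $|w_{n+1}| \le |cz_{n+1}| = |1 - w_{n+1}|$, which holds precisely because $w_{n+1} \le \|w\|_2 < 1/2$. Hence $\|B(p'-x)\|_2^2 \le c^2\sigma_{n+1}^2 + \|Mw\|_2^2 = \|e\|_2^2$; that is, the decoding residual at $p'$ is no larger than the original error.

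Finally, from $\|B(p'-x)\|_2 \le \|e\|_2$ and the bound $\|Bv\|_2 \ge \sigma_n(B)\|v\|_2$ valid for every $v$, I obtain $\|p' - x\|_\infty \le \|p'-x\|_2 \le \|e\|_2/\sigma_n(B) < 1/2$, so each $x_i = \lceil z_i/z_{n+1}\rfloor$ is computed correctly in Step 3, and the verification $\|Bx - b\|_2 = \|e\|_2 \le r$ passes, so the algorithm outputs $x$. The only genuine subtlety, as noted, is resisting the temptation to route the whole argument through the projection $cz$, which would force an extra rescaling and a dependence on $\|x\|_2$; instead one exploits the exact cancellation between the residual $Bp'-b$ and the error $e$ along the direction $u_{n+1}$, which is exactly what the orthogonality identity above captures.
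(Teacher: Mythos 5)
Your proof is correct, and it follows the paper's skeleton only up to the midpoint: both arguments form $M=(B,-b)$ and $\tilde x=(x^T,1)^T$, note $\|M\tilde x\|_2=\|e\|_2\le r$ so that $\sigma_{n+1}(M)\le r$, show $\sigma_n(M)\ge\sigma_n(B)>2r$, and decompose $\tilde x$ as a multiple of $v_{n+1}$ plus a component $w$ with $\|w\|_2<1/2$ (the paper packages that decomposition as Lemma \ref{lemma_SVD_Int}). Where you genuinely diverge is exactly at the point you flag as the heart of the matter. The paper's proof stops at the existence statement $\tilde x=\lfloor c\,v_{n+1}\rceil$ for \emph{some} scalar $c$, remarks that one could enumerate the finitely many candidate roundings with $0.5\le c\,z_{n+1}\le 1.5$, and then simply says the algorithm "takes $c$ to be $1/z_{n+1}$"; it never proves that this particular normalization rounds to $\tilde x$, so as written it fully justifies only an enumeration variant of the algorithm, not the single-rounding version actually specified in Figure \ref{Fig01}. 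Your argument closes precisely this gap: the identity $B(p'-x)=\sigma_{n+1}\,(w_{n+1}/z_{n+1})\,u_{n+1}-Mw$, the elementary observation $|w_{n+1}|\le|1-w_{n+1}|=|c\,z_{n+1}|$, and the resulting Pythagorean bound $\|B(p'-x)\|_2\le\|e\|_2$ yield $\|p'-x\|_\infty\le\|e\|_2/\sigma_n(B)<1/2$ with no dependence on $\|x\|_2$ --- and your diagnosis of why the naive route fails is accurate, since $z_i/z_{n+1}-x_i=(x_i w_{n+1}-w_i)/(1-w_{n+1})$ contains the term $x_i w_{n+1}$, which the bound $\|w\|_\infty<1/2$ alone cannot control when $x$ has large entries. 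You also make explicit that $z_{n+1}\ne 0$ so Step 2 does not abort, which the paper obtains only implicitly from $\lfloor c\,z_{n+1}\rceil=1$. In short, what the paper's route buys is modularity (a reusable rounding lemma plus a fallback enumeration of candidates); what yours buys is a rigorous guarantee for the deterministic choice $c=1/z_{n+1}$, i.e., a proof of the lemma as stated for the algorithm as stated, which is strictly more complete than the paper's own treatment.
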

\begin{proof}
We first need to find good bounds for  the singular values of  $M = (B, -b)$, especially $\sigma_n(M)$ and $\sigma_{n+1}(M)$.
Let $\tilde{x} = \begin{pmatrix} x \\ 1 \end{pmatrix}  \in \R^{n+1}$.  Since  $M \tilde{x} = Bx - b   = -e$, we have $\|M \tilde{x}\|_2 = \|e\|_2 \leq r$, and 
\begin{equation}\label{eq_r}
 \sigma_{n+1}(M) \leq  \frac{\|M \tilde{x}\|_2}{\|\tilde{x}\|_2} \leq \frac{r }{\sqrt{1 + \|x \|_2^2}}  \leq  r. 
\end{equation}

Next we show that  $\sigma_n(M) \ge \sigma_n(B)$.  Let $U_0$ be the subspace of $\R^{n+1}$ of dimension $n$ with the last entry equal to $0$.  
For any $y = \begin{pmatrix} u \cr 0\end{pmatrix} \in U_0$ where $u \in \R^n$, we have
$My = Bu$. Hence 
\[ \sigma_n(B) = \min_{u \in \R^n, \, \|u\|_2 = 1}  \|Bu\|_2 =  \min_{y \in U_0, \, \|y\|_2 = 1}  \|My\|_2.\]
Since 
\[ \sigma_n(M)=  \max_{U \subset \R^{n+1},\,  \dim(U) =n\ }  \min_{y \in U, \, \|y\|_2 =1}  \|My\|_2,\]
we see that $\sigma_n(M) \geq \sigma_n(B)$. 

Since  $\sigma_n(B) > 2r$ by assumprtion,  we have  $\sigma_n(M)  >    2 r  \ge  2 \|M \tilde{x}\|_2$,  thus  (\ref{eq_r}) implies that    $\sigma_n(M)  >     \sigma_{n+1}(M)$. 
We can apply Lemma \ref{lemma_SVD_Int} to any SVD of  $M$:  $M = U \Sigma V^T$ with  $k =n$ and $Z_k = \Span(v_{n+1})$  where $v_{n+1}$ is the last column of $V$. This means that 
$\tilde{x} = \lfloor cv_{n+1}  \rceil$ for some $c  \in \R$. Let $v_{n+1} = (z_1, \ldots, z_{n+1})^T$. Since the last entry of $\tilde{x}$ is $1$,  we just need to run through
 real numbers $c$ so that $0.5 \leq c z_{n+1} \leq 1.5$ that gives difference integer vectors  
  $\tilde{x} = \lfloor c\cdot v_{n+1} \rceil$, and the number of such $c$ is bounded by $n \|v_{n+1}\|_{\infty}/|z_{n+1}|$.  
In our algorithm and computer experiments, we simply take $c$ to be $1/z_{n+1}$.  
\end{proof}

\begin{theorem}\label{thm03}
 Let  $m$ and $n$ be integers with $ m =n \beta$ where $ \beta >1$ is a constant. Suppose
  \[r >0, \quad  0< \epsilon< 1-\sqrt{\beta^{-1}},\quad  \sigma  >  \frac{2 r }{(1-\sqrt{\beta^{-1}} -\epsilon)\sqrt{m}}.\] 
% $12(1 + \sqrt{4/3}) r/ \sqrt{n}$   
Let $\chi_{\sigma}$  be  the Gaussian distribution  $N(0,\sigma^2)$ on $\R$ with variance  $\sigma^2$. 
For a random $B \in \R^{m \times n}$, where each entry  is an independent sample from   $\chi_{\sigma}$,  and for any vector $b \in \R^m$ with $d(b, L(B)) \leq r$, 
the BDD algorithm in Figure \ref{Fig01} runs in polynomial time and finds $x \in \Z^n$  so that $d(b,Bx) \leq r$ with probability 
at least  $1 - e^{-\epsilon^2 m/2}$, where the success probability is over the randomness of $B$. 
\end{theorem}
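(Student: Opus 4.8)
The plan is to isolate the single probabilistic event that drives everything and then defer to the deterministic correctness already established in Lemma~\ref{Lemma_LWE}. That lemma guarantees the BDD algorithm returns a valid $x$ whenever $\sigma_n(B) > 2r$, and the algorithm's cost is dominated by one SVD of the $m\times(n+1)$ matrix $M=(B,-b)$, which is polynomial in $m$ and $n$. Hence the entire theorem reduces to the tail bound
\[
  P\bigl(\sigma_n(B) > 2r\bigr) \;\geq\; 1 - e^{-\epsilon^2 m/2}.
\]

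First I would normalize to the standard Gaussian ensemble. Writing $B=\sigma A$ with $A\in\R^{m\times n}$ having i.i.d.\ $N(0,1)$ entries, we have $\sigma_n(B)=\sigma\,\sigma_n(A)$, so the target becomes $\sigma_n(A) > 2r/\sigma$. The key input is the sharp lower-tail bound for the least singular value of a Gaussian matrix (Davidson--Szarek / Gordon): for every $t\geq 0$,
\[
  P\bigl(\sigma_n(A) \leq \sqrt{m}-\sqrt{n}-t\bigr) \;\leq\; e^{-t^2/2}.
\]
I would then set $t=\epsilon\sqrt{m}$. Since $n=m/\beta$ gives $\sqrt{m}-\sqrt{n}=\sqrt{m}\,(1-\sqrt{\beta^{-1}})$, on an event of probability at least $1-e^{-\epsilon^2 m/2}$ we obtain $\sigma_n(A)\geq \sqrt{m}\,(1-\sqrt{\beta^{-1}}-\epsilon)$. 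Multiplying through by $\sigma$ and invoking the hypothesis $\sigma > 2r/[(1-\sqrt{\beta^{-1}}-\epsilon)\sqrt{m}]$ yields $\sigma_n(B) > 2r$ on that event, which is exactly what Lemma~\ref{Lemma_LWE} requires. The side condition $\epsilon < 1-\sqrt{\beta^{-1}}$ simply keeps the coefficient $1-\sqrt{\beta^{-1}}-\epsilon$ positive, so the threshold on $\sigma$ is finite and the claim is nonvacuous.

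The one nontrivial ingredient---and hence the main obstacle---is the concentration inequality above. I expect to justify it from two standard facts: Gordon's comparison theorem, which gives $E[\sigma_n(A)]\geq \sqrt{m}-\sqrt{n}$ for $m\geq n$, and the observation that the map $A\mapsto\sigma_n(A)$ is $1$-Lipschitz with respect to the Frobenius norm, so that Gaussian concentration of Lipschitz functions supplies a sub-Gaussian tail with variance proxy $1$. The delicate point is securing the precise centering $\sqrt{m}-\sqrt{n}$ rather than a cruder bound, since it is exactly this sharp form that makes the choice $t=\epsilon\sqrt{m}$ align the success probability $1-e^{-\epsilon^2 m/2}$ with the stated threshold on $\sigma$. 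Everything else is bookkeeping.
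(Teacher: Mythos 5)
Your proposal is correct and takes essentially the same route as the paper's own proof: both reduce the theorem to the event $\sigma_n(B) > 2r$ via Lemma~\ref{Lemma_LWE} and then invoke the Davidson--Szarek lower-tail bound for the least singular value of a Gaussian matrix with the choice $t=\epsilon\sqrt{m}$, the only cosmetic difference being that the paper normalizes to entries $N(0,\frac{1}{m})$ via $A = \frac{1}{\sigma\sqrt{m}}B$ while you work with the standard normal ensemble $B=\sigma A$. The concentration inequality you flag as the ``main obstacle'' is precisely what the paper cites (Theorem 2.13 of Davidson--Szarek), and your Gordon-comparison-plus-Lipschitz-concentration sketch is its standard derivation, so no gap remains.
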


\begin{proof}
First, note that SVD can be computed in polynomial time \cite{DGH18,Golub2013}, hence our BDD algorithm runs in polynomial time.
Next, we need to examine the distribution of  the singular values of $B$. The well-known Bai-Yin Law (1993, \cite{BY93}) states that,
for any any probabilistic distribution $\chi$ on $\R$ with $E(\chi) = 0$, variance $E(|\chi|^2) = 1$ and $E(|\chi|^4) < \infty$, if $A$  is an $m \times n$ matrix with entries as i.i.d.\ from $\chi$, then  almost surely, 
\begin{equation}\label{eq_BY}
\lim_{m \to \infty} \frac{\sigma_n(A)}{\sqrt{m}}  =  1- \sqrt{\beta^{-1}}, \quad \quad   \lim_{m \to \infty} \frac{\sigma_1(A)}{\sqrt{m}}  = 1 + \sqrt{\beta^{-1}} .
\end{equation} 
There are more general concentration results, see for example  the book by Tao \cite{Tao12}.
For a Gaussian distribution, there is an effective version of the Bin-Yin Law. 
Suppose   $A$ is an $m \times n$ matrix with entries as i.i.d.\ from $N(0, \frac{1}{m})$. 
Davidson and Szarek (Theorem 2.13, \cite{DS01}) proved the following tail probability: 
\begin{equation}\label{eq_DS}
    P\left[ \sigma_n(A)  \le   1- \sqrt{\beta^{-1}} - t \right] <  e^{-\frac{mt^2}{2}}  \quad \text{ and } 
     P\left[ \sigma_1(A)  \ge  1+ \sqrt{\beta^{-1}} + t \right] <  e^{-\frac{mt^2}{2}}.
\end{equation}

Now back to our proof with $\chi_{\sigma}$ and $B$ as in Theorem \ref{thm03}. 
Let $A = \frac{1}{\sigma\sqrt{m}} B$. Then $A$ has entries  i.i.d.\  from  the Gaussian distribution $N(0, \frac{1}{m})$.  
% For any $0 < \epsilon < \sqrt{4/3} -1\approx 0.1547$,  
We apply the inequality (\ref{eq_DS}) with $A$ and $t = \epsilon$, yielding 
the probability bound:
\[   P\left[ \sigma\sqrt{m} \cdot ( 1- \sqrt{\beta^{-1}} -\epsilon)  \le  \sigma_n(B) \le  \sigma_1(B) \le  \sigma \sqrt{m}\cdot  (1 + \sqrt{\beta^{-1}} +\epsilon)    \right] \ge  1- e^{-\frac{m\epsilon^2}{2}}. \]
By  the assumption on $\sigma$, we have $\sigma_n(B) > 2r$ with probability at least $ 1- e^{-\frac{m\epsilon^2}{2}}$. 
The rest of the proof is completed by  applying  Lemma \ref{Lemma_LWE}.
\end{proof}

{\bf Proof of Theorem \ref{thm02}}.  We show Theorem \ref{thm02} as a special case of Theorem \ref{thm03}. Indeed, let $\beta= 4/3$,  $r = \sqrt{n}$, and $\chi_{\sigma}$ a sub-Gaussian with variance $\sigma^2$ with $\sigma \geq  17$.  Noting  that  $1-\sqrt{3/4}  \approx 0.13397$, we let  $\epsilon = 0.03$. Then  $17>  2 \sqrt{\beta^{-1}}/ (1-\sqrt{\beta^{-1}} - \epsilon)  \approx  16.66$, hence 
\[ \sigma \geq 17   >  \frac{2r }{(1-\sqrt{\beta^{-1}} -\epsilon)\sqrt{m}}.\]
Then  Theorem \ref{thm03} says that, for a random $B \in \R^{m \times n}$, where each entry  is an independent sample from   
$\chi_{\sigma}$,  and for any vector $b \in \R^m$ with $d(b, L(B)) \leq r$, 
the BDD algorithm in Figure \ref{Fig01}  finds $x \in \Z^n$  so that $d(b,Bx) \leq r$ with probability 
at least  $1 -e^{-0.0045m}$.  This proves Theorem \ref{thm02}.

For a general subGaussian distribution, the following version of the Bin-Yin Law  is due to Rudelson and Vershynin (Theorem 1.1, \cite{RV09}). %, Ver18}.
Let $\chi$ be any sub-Gaussian distribution on $\R$ with variance $1$.   
Suppose   $A$ is an $m \times n$ matrix with entries as i.i.d.\ from $\chi$. 
Then, for every $\epsilon > 0$ and $\beta = m/n >1$, we have
\begin{equation}\label{eq_RV}
    P\left[ \frac{\sigma_n(A)}{\sqrt{m}}   \le  \epsilon (1-\sqrt{\beta^{-1}} )  \right] \le  (c_1\epsilon)^{m-n+1} +  e^{-c_2m},
\end{equation}
where $c_1, c_2>0$ are constants depending only  on  the distribution $\chi$. 
This bound and the above proof of Theorem \ref{thm03} immediately lead to the following theorem.

\begin{theorem}\label{thm04}
 Let  $m$ and $n$ be integers with $ m =n \beta$ where  $\beta >1$ is a constant. Suppose
  \[r >0, \quad  \epsilon>0,\quad  \sigma >  \frac{2 r }{\epsilon  (1-\sqrt{\beta^{-1}})\sqrt{m}}.\] 
Let $\chi_{\sigma}$  be  a subGaussian distribution on $\R$ with variance  $\sigma^2$. 
For a random $B \in \R^{m \times n}$, where each entry  is an independent sample from   $\chi_{\sigma}$,  and for any vector $b \in \R^m$ with $d(b, L(B)) \leq r$, 
the BDD algorithm in Figure \ref{Fig01} runs in polynomial time and finds $x \in \Z^n$  so that $d(b,Bx) \leq r$ with probability 
at least  $1-(c_1\epsilon)^{m-n+1} -  e^{-c_2m}$, where the success probability is over the randomness of $B$ and  
$c_1, c_2>0$ are constants depending only  on  the distribution $\chi_{\sigma}$. 
\end{theorem}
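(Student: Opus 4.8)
The plan is to mirror the proof of Theorem \ref{thm03} almost verbatim, substituting the Rudelson--Vershynin bound (\ref{eq_RV}) for the Davidson--Szarek tail bound (\ref{eq_DS}). The polynomial running time is immediate: Step~1 of the algorithm computes an SVD, which is polynomial-time, and the remaining steps are elementary arithmetic and rounding. So the entire task reduces to showing that, with the stated probability, the random basis $B$ satisfies the single hypothesis $\sigma_n(B) > 2r$ required by Lemma \ref{Lemma_LWE}; once that holds, Lemma \ref{Lemma_LWE} delivers the correct $x$ with $d(b,Bx)\le r$.

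First I would normalize. Since (\ref{eq_RV}) is phrased for a sub-Gaussian distribution of variance $1$, I set $A = \frac{1}{\sigma}B$, so that each entry of $A$ is an i.i.d.\ sample from a variance-$1$ sub-Gaussian distribution. Because scaling a matrix by the positive constant $1/\sigma$ scales every singular value by $1/\sigma$, we have $\sigma_n(B) = \sigma\cdot\sigma_n(A)$. This normalization differs from the one used in Theorem \ref{thm03}, which divided by $\sigma\sqrt{m}$ to reach variance $1/m$; here the $\sqrt{m}$ factor already appears explicitly inside (\ref{eq_RV}), so only the $\sigma$ rescaling is needed.

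Next I would apply (\ref{eq_RV}) to $A$ with the given $\epsilon$ and $\beta = m/n$, obtaining
\[ P\!\left[\sigma_n(A) \le \epsilon\,(1-\sqrt{\beta^{-1}})\sqrt{m}\right] \le (c_1\epsilon)^{m-n+1} + e^{-c_2 m}.\]
Multiplying the event through by $\sigma$ and using $\sigma_n(B) = \sigma\,\sigma_n(A)$ converts this into a statement about $\sigma_n(B)$:
\[ P\!\left[\sigma_n(B) \le \sigma\,\epsilon\,(1-\sqrt{\beta^{-1}})\sqrt{m}\right] \le (c_1\epsilon)^{m-n+1} + e^{-c_2 m}.\]
The hypothesis $\sigma > \frac{2r}{\epsilon(1-\sqrt{\beta^{-1}})\sqrt{m}}$ is precisely the inequality $\sigma\,\epsilon\,(1-\sqrt{\beta^{-1}})\sqrt{m} > 2r$, so the event $\{\sigma_n(B)\le 2r\}$ is contained in the event bounded above. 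Hence $\sigma_n(B) > 2r$ holds with probability at least $1-(c_1\epsilon)^{m-n+1}-e^{-c_2m}$, and on this event Lemma \ref{Lemma_LWE} finishes the argument.

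I do not anticipate any genuine obstacle, since the paper itself flags that (\ref{eq_RV}) and the proof of Theorem \ref{thm03} lead to the result ``immediately.'' The only point requiring care is the bookkeeping of the normalization: verifying that the variance-$1$ scaling $A=B/\sigma$ (rather than the variance-$1/m$ scaling of Theorem \ref{thm03}) is the correct match for the form in which (\ref{eq_RV}) is stated, and checking that the multiplicative parameter $\epsilon$ of the Rudelson--Vershynin bound lines up exactly with the $\epsilon$ appearing in the hypothesis on $\sigma$, so that the threshold $\sigma\,\epsilon\,(1-\sqrt{\beta^{-1}})\sqrt{m}$ clears $2r$.
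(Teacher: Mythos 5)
Your proposal is correct and is exactly the argument the paper intends: the paper's entire proof of Theorem \ref{thm04} is the remark that the Rudelson--Vershynin bound (\ref{eq_RV}) together with the proof of Theorem \ref{thm03} ``immediately lead'' to the result, and you have filled in precisely those steps, including the right variance-$1$ normalization $A = B/\sigma$ (matching the form of (\ref{eq_RV}), where the $\sqrt{m}$ is already explicit) and the check that the hypothesis on $\sigma$ makes the threshold $\sigma\,\epsilon\,(1-\sqrt{\beta^{-1}})\sqrt{m}$ exceed $2r$ so that Lemma \ref{Lemma_LWE} applies.
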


Theorem \ref{thm02b} in the introduction is a special case of Theorem \ref{thm04} when $\epsilon = 1/(2c_1)$ and $r = \sqrt{n}$. 
Determining optimal values for the constants $c_1$ and $c_2$ remains an open challenge. However, we note that Hsu, Kakade, and Zhang (Lemma A.1 in \cite{HKZ12}) derived an explicit bound on the largest singular value  $\sigma_1(B)$ with explicit  constants.  Their method could be used to derive explicit  bounds for the smallest singular value $\sigma_n(B)$. 
We direct readers to their paper for further technical details. In the subsequent section, we present computational experiments to evaluate the success probabilities, demonstrating that the practical performance of our BDD algorithm  significantly surpasses the theoretical predictions provided by these bounds.

\section{Computer Experiments on the LWE Problem over Real Numbers} \label{sec4}

We evaluate the performance of the BDD algorithm in solving the learning with errors (LWE) problem over real numbers, studied by Bootle et al.\  \cite{BDE18}.  
For any $\gamma>0$, let $U_{\gamma}$ denote the uniform distribution on the real interval $[-\gamma, \gamma]$, which is sub-Gaussian with parameter $\gamma$ and variance $\frac{\gamma^2}{3}$.   
Let $\beta >1$,  $\theta >0$  and $\gamma_0>0$ be constants, and
\begin{equation}\label{eq_mg}
 m= \lceil \beta n\rceil, \quad \gamma_1 =\theta  \gamma_0.
 \end{equation}
% Note that this $\beta$ is the inverse of the $\beta$ from previous sections, since  it is a  more convenient measure for  the sample size $m$ compared to $n$ (the size of secret information).
 The LWE problem over real numbers is to find $x \in \Z^n$, given $B \in \R^{m \times n}$  and  $b \in \R^m$, such  that
\begin{equation}\label{eq_lwe2}
 b = Bx + e, 
 \end{equation}
where each entry of $B$  is an independent sample from $U_{\gamma_1}$ and 
 each entry of $e$ is an an independent sample from $U_{\gamma_0}$.    Bootle et al.\  \cite{BDE18} proved that this problem can be solved by the least square solution when $m$ is sufficiently large. They studied the LWE problem over reals for more general subGaussian distributions and gave an explicit bound on $m$ in their Theorem 4.5, which says that  the ratio $\beta= m/n$ should be  at least $C_1=2^8\log 9$.

  We would like to demonstrate that our SVD method works well for the case $\gamma_0 < \gamma_1$. It works not just  when the ratio $\beta=m/n$ is large, but also for small $\beta$, say $\beta  \leq 2$.  
 We used the Multiprecision Computing Toolbox for MATLAB \cite{mct23} for computing SVD.
 In our computer experiments, we generate sample data $(B,b)$ satisfying (\ref{eq_lwe2}) as follows. For any given $(n, \beta, \gamma_0, \theta)$, 
 we calculate $m$ and $\gamma_1$ as defined in (\ref{eq_mg}), select $x$ uniformly from $\{0, 1\}^n$, and generate each entry of $B$ and $e$ as an independent sample from $U_{\gamma_1}$ and $U_{\gamma_0}$, respectively. We then rescale $(B, b)$ so that $\gamma_0 = 1$. 
 For each  $(n, \beta,  \theta)$  in Table \ref{data1}, we generated 1000 random samples of $(B, b)$, recording the number of successful runs where the BDD algorithm correctly recovered $x$. The success probability is calculated as the number of successful runs divided by 1000. Results in the "Prob" column show that, for $\theta = 2$ (or higher) and $\beta \geq 1.2$, the BDD algorithm achieves at least a 97\% success rate, and for $\beta = 1.5$ (or higher) with $\theta \geq 1.1$, the success rate is at least 95\%. These findings suggest that the LWE problem over real numbers can be solved efficiently on average.

 \begin{table}[h]
\centering
\begin{tabular}{|c|c|c|c|c|c||c|c|c|c|c|c|}
\hline
\( n \) & \( \beta \) & \( \theta \) & \( \text{Prob} \) & \( \text{Prob*} \) & 
\( n \) & \( \beta \)  & \( \theta \) & \( \text{Prob} \) & \( \text{Prob*} \)  \\ \hline
100 & 1.0  & 2 & 0.007 & 0.000 & 100 & 1.5  & 0.7 & 0.088 & 0.026  \\ \hline
100 & 1.1  & 2 & 0.723 & 0.242 & 100 & 1.5  & 0.9 & 0.647 & 0.395 \\ \hline
100 & 1.2  & 2 & 0.979 & 0.740 & 100 & 1.5  & 1.1 & 0.957 & 0.678  \\ \hline
100 & 1.3  & 2 & 1.000 & 0.966 & 100 & 1.5  & 1.3 & 0.997 & 0.826 \\ \hline
100 & 1.4  & 2 & 1.000 & 0.996 & 100 & 1.5  & 1.5 &1.000  & 0.871 \\ \hline
100 & 1.5  & 2 & 1.000 & 0.999 & 100 & 1.5  & 1.7 & 1.000 & 0.991 \\ \hline
100 & 1.6  & 2 & 1.000 & 1.000 & 100 & 1.5  & 1.9 & 1.000 & 0.999 \\ \hline
100 & 1.7  & 2 & 1.000  & 1.000 & 100 & 1.5 & 2.1 & 1.000  & 1.000 \\ \hline
\end{tabular}
\caption{The success probability for the BDD algorithm, where $\gamma_0 =1$.\\
The column of Prob is for the LWE problem  over reals, and \\
the column of Prob* is  for the LWE problem over integers.}
\label{data1}
\end{table}

 We also conducted experiments on the integer version of the LWE problem. The "Prob*" column reflects the success probability for this version. For each $B$ and $e$ sampled from $U_{\gamma_1}$ and $U_{\gamma_0}$, we rounded each entry to the nearest integer and computed $b$ as in (\ref{eq_lwe2}). With $\gamma_0 = 1$, each entry of $e$ is in $\{-1, 0, 1\}$ with probabilities $1/4, 1/2$, and $1/4$, respectively. When $\gamma_1 = \theta \leq 2.1$, the entries of $B$ are integers bounded by $\theta$, distributed as follows:
 \begin{enumerate}
\item[(a)]  For $0.7 \leq \theta \leq 1.5$: entries $-2$ and $2$ have probability 0,  while $-1$ and $1$ each have probability $\frac{\theta - 0.5}{2 \theta}$, and $0$ has probability $\frac{1}{2 \theta}$.
\item[(b)]   For $1.5 < \theta \leq 2.1$: entries $-2$ and $2$ each have probability $\frac{\theta - 1.5}{2 \theta}$, and $-1$, $0$, and $1$ each have probability $\frac{1}{2 \theta}$.
\end{enumerate}
When $\beta \geq 1.5$ and $\theta \geq 1.1$, results in the "Prob*" column show that the BDD algorithm achieves a success rate of at least 67\%.

The above experiment data indicate the BDD problem for lattices generated from sub-Gaussian distributions can be solved in polynomial time on the average-case, and the BDD decoding algorithm perform much better than the theoretical bounds in Theorem \ref{thm04} might indicate.

\section{Conclusions}

In conclusion, we presented an efficient algorithm that solves the BDD problem for ensembles of lattices generated by matrices from Gaussian and sub-Gaussian distributions. In the worst case, the BDD problem for these  lattices remain hard to solve unless NP = P; however, on the average case, our BDD algorithm solves them in polynomial time. Our experimental results indicate that the BDD algorithm performs much better than the theoretical bounds may indicate. It remains an open problem to find the optimal values for the constants $c_1$ and $c_2$ for sub-Gaussian distributions. 

It is worth noting that our algorithm does not perform well on lattices generated by matrices associated with the 3-SAT problem, nor does it appear effective for the LWE problem over finite fields. Given that several post-quantum cryptographic schemes, including recent standards published by NIST, are lattice-based, further work is needed  to evaluate the potential impact on these PQC schemes.

\bigskip
\noindent \textbf{Acknowledgement:} This work is based upon the work supported by the National Center for Transportation Cybersecurity and Resiliency (TraCR) (a U.S. Department of Transportation National University Transportation Center) headquartered at Clemson University, Clemson, South Carolina, USA. Any opinions, findings, conclusions and recommendations expressed in this material are those of the author(s) and do not necessarily reflect the views of TraCR, and the U.S. Government assumes no liability for the contents or use thereof.

% \section{References}
\bibliographystyle{plain}

\bibliography{Lattice_BDD_Ref.bib}

\end{document}